\documentclass[runningheads]{llncs}

\usepackage{amsmath}
\usepackage{amsfonts}
\usepackage{amssymb}
\usepackage{booktabs}
\usepackage{graphicx}
\usepackage[colorlinks]{hyperref}
\usepackage[switch]{lineno}
\usepackage[shortlabels]{enumitem}
\usepackage{subcaption}
\usepackage{longtable}

\usepackage[dvipsnames]{xcolor}






\newcommand{\cand}{\mathcal{C}}

\newcommand{\assertion}{\mathcal{A}}

\newcommand{\ballots}{\mathcal{B}}
\newcommand{\ballot}{\beta}
\newcommand{\election}{\mathcal{L}}

\newcommand{\seats}{S}

\newcommand{\quota}{Q}
\newcommand{\proj}{\sigma}

\DeclareMathOperator{\first}{first}

\DeclareMathOperator{\argmin}{argmin}

\newcommand{\NWA}{\mathit{NWA}}
\newcommand{\NEB}{\textsf{AG}}
\newcommand{\IQ}{\textsf{IQ}}
\newcommand{\UT}{\textsf{UT}}

\newcommand{\CNEB}{\textsf{NL}}

\newcommand{\Lbetter}{L_\textrm{elim}}

\newcommand{\ignore}[1]{}



\title{A First Approach to Risk-Limiting Audits for Single Transferable Vote
Elections\thanks{To appear in the Workshop on Advances in Secure Electornic
Voting (Voting'22), on 18 February 2022.}}

\titlerunning{A First Approach to Risk-Limiting Audits for STV Elections}

\author{
Michelle Blom     \inst{1}   \orcidID{0000-0002-0459-9917}  \and
Peter J. Stuckey  \inst{2}   \orcidID{0000-0003-2186-0459}  \and
Vanessa Teague    \inst{3}   \orcidID{0000-0003-2648-2565}  \and
Damjan Vukcevic   \inst{4,5} \orcidID{0000-0001-7780-9586}}

\authorrunning{Blom, Stuckey, Teague, Vukcevic}

\institute{
School of Computing and Information Systems, University of Melbourne,
Parkville, Australia \\
\email{michelle.blom@unimelb.edu.au}
\and
Department of Data Science and AI, Monash University, Clayton, Australia
\and
Thinking Cybersecurity Pty.\ Ltd., Melbourne, Australia
\and
School of Mathematics and Statistics, University of Melbourne, Parkville,
Australia
\and
Melbourne Integrative Genomics, University of Melbourne, Parkville,
Australia \\
\email{damjan.vukcevic@unimelb.edu.au}}


\begin{document}

\maketitle

\begin{abstract}
Risk-limiting audits (RLAs) are an increasingly important method for checking
that the reported outcome of an election is, in fact, correct. Indeed, their
use is increasingly being legislated. While effective methods for RLAs have
been developed for many forms of election---for example: first-past-the-post,
instant-runoff voting, and D'Hondt elections---auditing methods for single
transferable vote (STV) elections have yet to be developed. STV elections are
notoriously hard to reason about since there is a complex interaction of votes
that change their value throughout the process.  In this paper we present the
first approach to risk-limiting audits for STV elections, restricted to the
case of 2-seat STV elections.
\end{abstract}


\section{Introduction}

Single transferable vote (STV) elections are a method for selecting candidates
to fill a set of seats in a single election, which tries to achieve
proportional representation with respect to voters' preferences expressed as a
ranked list of candidates.  STV elections are used in many places throughout
the world including Australian Senate elections, all elections in Malta,
provincial elections in Canada, many elections in Ireland, and in more than 20
cities in the USA. STV elections are considered as one of the better multi-seat
election methods because they achieve some form of ranked proportional
representation, unlike many multi-seat elections, although some consider the
complexity for voters of having to rank candidates a drawback.

STV elections are one of the most complex form of election to reason about
because the value of ballots can change across the election process.  When a
candidate achieves a tally of votes large enough to be awarded a seat (a
\emph{quota}) then each ballot currently in their tally is transferred to the
next eligible candidate listed on the ballot, at a reduced value (the
\emph{transfer value}). The transfer value is calculated (and there are a
number of possibilities here) so the total value of the ballots transferred is
no greater than the tally minus the quota, thus enforcing the idea that each
vote has a value of 1 which may be used (in parts) in electing multiple
candidates.

Risk-limiting audits (RLAs)~\cite{lindemanStark12} are a form of auditing of
election results to determine with some statistical likelihood that the correct
result was determined. They rely on comparing paper ballots, the ground truth
of the election, with the electronic recorded information to check the result.
The \emph{risk limit} is an upper bound on the probability that an incorrect
election outcome will not be corrected by the audit. RLAs are increasingly used
around the world, and sometimes their use is mandated by legislation. While RLA
methods have been determined for many forms of elections:
first-past-the-post~\cite{lindemanStark12}, any scoring function,\footnote{%
Any social choice function that is a \emph{scoring rule}---that assigns
`points' to candidates on each ballot, sums the points across ballots, and
declares the winner(s) to be the candidate(s) with the most `points'---can be
audited using SHANGRLA (see below).} instant-runoff voting
(IRV)~\cite{blom2019raire}, D'Hondt~\cite{StarkTeague2014} and Hamiltonian
elections~\cite{voting21}, there are currently no approaches to risk-limiting
audits for STV elections.  In this paper we make a first step towards this,
restricting attention to 2-seat STV elections, which are the simplest form.  To
do so we generate auditing machinery which should also be useful for larger STV
elections, but we leave the exact mechanisms required  as future work.


\section{Preliminaries}

\subsection{Single transferable vote elections}

STV is a multi-winner preferential voting system.  Voters rank candidates (or
parties) in order of preference. The $\seats$ seats are allocated in a way that
reflects both \emph{proportionality} (voting blocks should be represented in
approximately the proportions that people vote for them) and \emph{preference}
(if a voter's favourite candidate cannot win, or receives more than necessary
for a seat, that voter's later preferences influence who else gets a seat).

The set of candidates is $\cand.$ A ballot $\ballot$ is a sequence of
candidates $\pi$, listed in order of preference (most popular first), without
duplicates but without necessarily including all candidates. We use list
notation (e.g., $\pi = [c_1,c_2,c_3,c_4]$). The notation $\first(\pi) = \pi(1)$
denotes the first item (candidate) in  sequence $\pi$. An STV election
$\election$ is defined as a multiset\footnote{A multiset allows for the
inclusion of duplicate items.} of ballots.

\begin{definition}[STV Election]
An STV election $\election$ is a tuple $\election = (\cand, \ballots, \quota,
\seats)$ where $\cand$ is a set of candidates,  $\ballots$ the multiset of
ballots cast, $\quota$ the election quota (the number of votes a candidate must
attain to win a seat---usually the Droop quota---\autoref{eqn:Droop}), and
$\seats$ the number of seats to be filled.
\begin{equation}
\quota = \left\lfloor \frac{|\ballots|}{\seats + 1} \right\rfloor + 1
\label{eqn:Droop}
\end{equation}
\end{definition}

\begin{definition}{\textbf{Projection} $\mathbf{\proj_\mathcal{S}(\pi)}$}
We define the projection of a sequence $\pi$ onto a set $\mathcal{S}$ as the
largest subsequence of $\pi$ that contains only elements of $\mathcal{S}$. (The
elements keep their relative order in $\pi$.) For example:
$\proj_{\{c_2,c_3\}}([c_1,c_2,c_4,c_3]) = [c_2,c_3]$ 
 and $\proj_{\{c_2,c_3,c_4,c_5\}}([c_6,c_4,c_7,c_2,c_1]) = [c_4,c_2].$
\label{def:Projection}
\end{definition}

Each ballot starts with a value of 1, and may change its value as counting
progresses.  Throughout the count, each eligible candidate has a non-decreasing
\textit{tally} of ballots.  Ballots can be redistributed between candidates in
two ways.  If a candidate achieves a quota, their ballots will be redistributed
with a reduced value. If a candidate is \emph{eliminated}, their ballots are
passed down the preference list at their current value. The following
paragraphs describe the algorithm.

Initially, each ballot's value is 1 and each candidate is awarded all ballots
on which they are ranked first.  A seat is awarded to every candidate whose
tally has reached or exceeded $\quota$. When candidate $c \in \cand$  achieves
a quota, the ballots counting towards their tally  are distributed to remaining
eligible candidates at a reduced value as follows. (A candidate is eligible if
they have not been eliminated, and their tally has not reached a quota's worth
of votes.) Let $V_c$ denote the total \emph{value} of ballots counting towards
$c$ in the round that $c$ is awarded a seat, and $|\ballots_c|$ the
\emph{number} of those ballots. Each of these ballots is given a new value of
$\tau$, and distributed to the next most preferred eligible candidate on the
ballot.

One way of computing $\tau$ is the \emph{unweighted Gregory method}, given by:
\begin{equation}
\tau = \frac{V_c - \quota}{|\ballots_c|}.
\label{eqn:tvalue}
\end{equation}
This method is used in Australian Senate elections.  Note that ballots can
increase in value after a second transfer, but never above 1.
There are alternative ways to calculate transfer value, but our analysis is
agnostic about them, as long as they satisfy some bounds described in
\autoref{subsec:improvedUpperBound}.  Our empirical results use the unweighted
Gregory method, but other methods are likely to be very similar. We do not
consider randomised methods for distributing votes.

If no candidate has achieved a quota, the candidate $c_e$ with the fewest votes
is eliminated. Each ballot currently counting towards $c_e$ is distributed to
its next most preferred eligible candidate, at its current value.

Each round of counting thus either awards seats to candidates that have
achieved a quota, or eliminates a candidate with the lowest tally.  Either way,
their ballots are redistributed. This continues until either all seats have
been awarded, or the number of eligible candidates equals the number of seats
left to be awarded. In the latter case, every remaining candidate is awarded a
seat.

\emph{Terminology:}  We will use the term ``is seated'' to include either way
of getting a seat, while ``gets a quota'' is reserved for getting a seat by
obtaining a quota.  We say a candidate is ``eligible'' if it has not been
eliminated nor reached a quota.

\begin{table}
\begin{subtable}{.3\columnwidth}
\centering
\begin{tabular}{lr}
& \\
\toprule
Ranking & Count \\
\midrule
{}[$c_1$, $c_3$]        &  8,001 \\
{}[$c_1$]               &  1,000 \\
{}[$c_2$, $c_3$, $c_4$] &  3,000 \\
{}[$c_3$, $c_4$]        &  5,000 \\
{}[$c_4$, $c_1$, $c_2$] &  4,000 \\
\midrule
Total                   & 21,001 \\
\bottomrule
\end{tabular}
\caption{}
\label{tab:EGSTV1a}
\end{subtable}$\,\,\,$
\begin{subtable}{.7\columnwidth}
\centering
\begin{tabular}{crrr}
\multicolumn{4}{c}{Seats: 2 \quad Ballots: 21,001 \quad Quota: 7,001} \\
\toprule
Candidate & Round 1 & Round 2 & \phantom{M} Round 3 \\
\midrule
 & Elect $c_1$       & \phantom{M} Eliminate $c_2$ & Elect $c_3$ \\
 & $\tau_1 = 0.2222$ &                             &             \\
\midrule
  $c_1$ &  9,001  &  ---   &  ---    \\
  $c_2$ &  3,000  &  3,000 &  ---    \\
  $c_3$ &  5,000  &  6,778 &  9,778  \\
  $c_4$ &  4,000  &  4,000 &  4,000  \\
\midrule
 Total  & 21,001  & 13,778 & 13,778  \\
\bottomrule
\end{tabular}
\caption{}
\label{tab:EGSTV1b}
\end{subtable}
\caption{(\autoref{ex:ex}) An STV election profile, stating (a)~the number of
ballots cast with each listed ranking over candidates $c_1$ to $c_4$, and
(b)~the tallies after each round of counting, election, and elimination.}
\vspace{-0.4cm}
\label{tab:EGSTV1}
\end{table}

\begin{example}\label{ex:ex}
Consider a 2-seat STV election with four candidates $\cand = \{ c_1, c_2, c_3,
c_4\}$ with  ballots $\ballots$ shown in \autoref{tab:EGSTV1a}. The (Droop)
quota for this election is calculated as $Q = \left\lfloor 21001 / 3
\right\rfloor + 1 = 7001$.  The election proceeds as shown in
\autoref{tab:EGSTV1b}.  Candidate $c_1$ initially has more than a quota and is
elected to a seat.

Using the unweighted Gregory method, the transfer value $\tau$ is determined as
$2000 / 9000 = 0.2222$.  The 8001 transferable ballots with ranking [$c_1,$
$c_3$] go to $c_3$ each with value 0.2222 for a total of 1778. The remaining
ballots in $c_1$'s tally have a ranking of [$c_1$].  These ballots have no
eligible next preference and are exhausted (not redistributed).  Note how some
vote value 222 is lost here.

In the next round no candidate has a quota so the candidate $c_2$ with the
least tally is eliminated. The votes in their pile all flow to $c_3$ as next
remaining unelected candidate.  Now $c_3$ has a quota and is elected.
\qed
\end{example}

\subsection{Assertion-based risk-limiting audits}

SHANGRLA~\cite{shangrla} is a general framework for conducting RLAs. It offers
a wide variety of social choice functions, statistical risk functions and audit
designs (such as stratified audits or ballot-comparison audits).

This generality is achieved by abstraction: a SHANGRLA audit first reduces the
correctness of a reported outcome to the truth of a set $\mathcal{A}$ of
quantitative \emph{assertions} about the set of validly cast ballots, which can
then be tested using statistical methods. The assertions are either true or
false depending on the votes on the ballots. If every assertion in
$\mathcal{A}$ is true, the reported outcome is correct.  $\mathcal{A}$
generally depends on the    social choice function and the reported electoral
outcome, and may also depend on the cast vote records (CVRs), vote subtotals,
or other data generated by the voting system.

For example, in a first-past-the-post election in which Alice is the apparent
winner, $\mathcal{A}$ could include an assertion, for each other candidate $c$,
that there are more  votes for Alice than $c.$ In this example, $\mathcal{A}$
is both necessary and sufficient: if any assertion $A \in \mathcal{A}$ is
false, then Alice did not win (except possibly in a tie).  In general, however,
the assertions in $\mathcal{A}$ must be \emph{sufficient} to imply that the
announced election outcome is correct, but they need not be necessary: the
announced electoral result may be correct even if some assertions in
$\mathcal{A}$ are false.  The assertions we derive for STV in this paper are
sufficient but not necessary for supporting the announced election outcome. 

SHANGRLA expresses each assertion $A \in \mathcal{A}$  as an \emph{assorter},
which is a function that assigns a nonnegative value to each ballot, depending
on the selections the voter made on the ballot and possibly other information
(e.g.\ reported vote totals or CVRs). The assertion is true iff the mean of the
assorter (over all ballots) is greater than $1/2$. Generally, ballots that
support the assertion score higher than $1/2$, ballots that weigh against it
score less than $1/2$, and neutral ballots score exactly $1/2$. In the
first-past-the-post example above, $A$ might assert that Alice's tally is
higher than Bob's. The corresponding assorter would assign 1 to a ballot if it
has a vote for Alice, 0 if it has a vote for Bob, and $1/2$ if it has a no
valid vote for either.


\section{Reasoning about STV elections: deriving bounds and assertions}
\label{sec:reasoning}

In order to make verifiable assertions about STV elections we need to examine
how we can reason about STV elections. In this section we define testable
assertions for reasoning about STV elections (summarised in
\autoref{tab:definitions}).

\begin{table}[t]
\centering
\caption{Summary of definitions.}
\begin{tabular}{llr}
\toprule
Quantity/Assertion & Description & Page  \\
\midrule
\multicolumn{3}{l}{\textbf{Lower and upper bounds}} \\
$L_\textrm{basic}(c)$ &
    First preferences for $c$ &
    \pageref{quantity:lower-basic} \\
$U_\textrm{basic}(c)$ &
    Ballots mentioning $c$ &
    \pageref{quantity:upper-basic} \\
$U_\textrm{comp}(c,c')$ &
    Ballots where $c$ appears before $c'$ &
    \pageref{quantity:upper-comparative} \\
$\Lbetter(w,O)$ &
    Lower bound for $w$'s tally, assuming it is never &
    \pageref{quantity:lower-better} \\
  & \quad less than that of each candidate in $O$ \\
$U_\textrm{complex}(c,b,W,\overline{\tau})$ \phantom{M} &
    A complex upper bound for $c$'s tally &
    \pageref{quantity:upper-complex}  \\
\addlinespace
\multicolumn{3}{l}{\textbf{Assertions}} \\
$\IQ(c)$ &
    $c$ gets a quota initially &
    \pageref{assertion:IQ} \\
$\UT(c,\overline{\tau})$ &
    $c$'s transfer value is less than $\overline{\tau}$ &
    \pageref{assertion:UT} \\
$\NEB(w,l)$ &
    $w$'s tally is always greater than $l$'s tally &
    \pageref{assertion:NEB} \\
$\CNEB(w,l,W,$ $\overline{\tau},G,O)$ &
    $w$ `never loses' to $l$, given some assumptions &
    \pageref{assertion:CNEB} \\
\bottomrule
\end{tabular}
\label{tab:definitions}
\end{table}

\subsection{Simple bounds and assertions}

A simple lower bound on the tally of candidate $c$ is the number of first
preference votes they receive: $L_\textrm{basic}(c) = | \{ \ballot : \ballot
\in \ballots, \first(\ballot) = c \}|$. \label{quantity:lower-basic}

Given this bound we can introduce our first type of assertion, that a candidate
gets a quota initially: $\IQ(c) \equiv L_\textrm{basic}(c) \geqslant \quota$.
\label{assertion:IQ}

\begin{lemma}
If $\IQ(c)$ holds then $c$ is seated.
\qed
\end{lemma}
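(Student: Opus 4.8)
The plan is to show that the condition $\IQ(c)$, namely $L_\textrm{basic}(c) \geqslant \quota$, forces $c$ to obtain a quota in the very first round of counting, and hence to be seated under the definition of the STV algorithm. First I would recall that $L_\textrm{basic}(c)$ is exactly the number of ballots on which $c$ is ranked first, and that the algorithm initialises every ballot with value $1$ and awards each candidate all ballots on which they are ranked first. Therefore the initial tally of $c$ equals $L_\textrm{basic}(c)$. Since tallies are non-decreasing throughout the count, $c$'s tally is at least $L_\textrm{basic}(c) \geqslant \quota$ at every round in which $c$ remains eligible.

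Next I would appeal directly to the rule stated in the preliminaries: ``A seat is awarded to every candidate whose tally has reached or exceeded $\quota$.'' Applied at the first round, this means $c$ is awarded a seat immediately, because its initial tally already meets the quota. In particular $c$ is never eliminated before being seated, since elimination only occurs when \emph{no} candidate has achieved a quota, and here $c$ has. By the terminology fixed in the excerpt, ``gets a quota'' is a special case of ``is seated'', so $c$ is seated, which is the claim.

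The only subtlety worth addressing is tie-breaking and the order in which seating happens within a round: even if several candidates simultaneously reach a quota in round $1$, the algorithm seats \emph{every} such candidate, so $c$ is among them regardless of any ordering convention; and a candidate's tally can only grow, so nothing that happens later in round $1$ (transfers from another quota-achieving candidate) can disqualify $c$. Hence there is no real obstacle here — the argument is essentially a direct unpacking of the definitions. The main point to be careful about is simply to invoke the monotonicity of tallies and the precise wording of the seating rule, rather than to perform any computation.
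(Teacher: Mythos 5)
Your proof is correct and matches the paper's (implicit) reasoning exactly: the paper states this lemma without proof, treating it as an immediate consequence of the facts that $c$'s initial tally equals $L_\textrm{basic}(c) \geqslant \quota$ and that the counting rule seats every candidate whose tally reaches the quota. Your careful unpacking of these definitions, including the remarks on monotonicity and simultaneous quota attainment, is a sound elaboration of the same one-line argument.
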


The next assertion we introduce is one that upper bounds the transfer value at
$\overline{\tau}$ for candidates that have an initial quota:
$\UT(c,\overline{\tau}) \equiv L_\textrm{basic}(c) < \quota / (1 -
\overline{\tau})$. \label{assertion:UT}
Clearly if the initial tally for $c$ is $T < \quota / (1 - \overline{\tau})$
then the transfer value (using the unweighted Gregory method) is $(T - \quota)
/ T < \overline{\tau}$.  We use this in \autoref{sec:OneQuota} to improve upper
bounds on tallies.

A simple upper bound on the tally of a candidate $c$ is the number of ballots
on which they appear: $U_\textrm{basic}(c) = | \{ \ballot : \ballot \in
\ballots, c \text{ occurs in } \ballot \}|$. \label{quantity:upper-basic}

We can improve this upper bound when comparing against an alternative candidate
$c'$.  The number of ballots where $c$ appears before $c'$ (including the case
where $c'$ doesn't appear) is $U_\textrm{comp}(c,c') = | \{ \ballot : \ballot
\in \ballots, \first\left(\proj_{\{c,c'\}}(\ballot)\right) = c \}|$.
\label{quantity:upper-comparative}
This is the maximum number of ballots that can appear in the tally of $c$
before $c'$ is eliminated.

We can use this to state a sufficient condition that candidate $w$'s tally is
\emph{always greater}\footnote{Previous IRV auditing work~\cite{blom2019raire}
has used the term \emph{not eliminated before} for this concept, but we reserve
it for a more restrictive notion defined below.} than candidate $l$'s:
$\NEB(w,l) \equiv L_\textrm{basic}(w) > U_\textrm{comp}(l,w)$.
\label{assertion:NEB}

\begin{lemma}
If $\NEB(w,l)$ holds then candidate $w$'s tally is always greater than $l$'s.
\end{lemma}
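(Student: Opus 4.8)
The plan is to control both tallies at every round in which $w$ is still a continuing candidate: bound $w$'s tally from below by $L_\textrm{basic}(w)$, bound $l$'s tally from above by $U_\textrm{comp}(l,w)$, and then read off the conclusion from the hypothesis $L_\textrm{basic}(w) > U_\textrm{comp}(l,w)$.

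For the lower bound on $w$: at the start, each candidate is awarded exactly the ballots that rank it first, so $w$'s round-one tally equals $L_\textrm{basic}(w)$; since the tally of an eligible candidate is non-decreasing throughout the count, $w$'s tally is at least $L_\textrm{basic}(w)$ in every subsequent round in which $w$ remains eligible.

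For the upper bound on $l$: I would argue that while $w$ is still eligible, every ballot contributing to $l$'s tally is one of the ballots counted by $U_\textrm{comp}(l,w)$. The key structural fact is that, under either redistribution rule (elimination, or a quota transfer at a reduced value), a ballot moves onto $l$ only after every candidate listed before $l$ on that ballot has left the count --- has been eliminated or awarded a quota. Hence a ballot on which $w$ precedes $l$ cannot be in $l$'s tally while $w$ is still eligible; equivalently, any ballot currently in $l$'s tally either omits $w$ or lists $l$ before $w$, i.e.\ satisfies $\first(\proj_{\{l,w\}}(\ballot)) = l$. Since, moreover, every ballot carries value at most $1$ at every point of the count, $l$'s tally cannot exceed the number of such ballots, which is exactly $U_\textrm{comp}(l,w)$.

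Putting the two bounds together, in every round with $w$ eligible we get $\tally{w} \ge L_\textrm{basic}(w) > U_\textrm{comp}(l,w) \ge \tally{l}$, which is the claim; in particular $w$ is never the lowest-tally continuing candidate while $l$ is still in, so $w$ is never eliminated before $l$. I expect the one delicate point to be making the ``a ballot reaches $l$ only after all earlier-listed candidates have left the count'' claim airtight: it has to be checked against both transfer mechanisms, and against the corner cases of the algorithm --- tie-breaking in elimination, and the termination rule that simultaneously seats all remaining candidates --- under which a ballot may simply never reach $l$ at all, so the bound holds a fortiori. A secondary, purely cosmetic issue is fixing a convention for the tally comparison in rounds after $l$ or $w$ has been seated or eliminated; since the statement as proved is precisely what is needed downstream (that $w$ survives at least as long as $l$), I would phrase the conclusion as holding for as long as both candidates are continuing.
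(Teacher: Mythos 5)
Your proof is correct and follows the same route as the paper's: lower-bound $w$'s tally by $L_\textrm{basic}(w)$ via monotonicity, upper-bound $l$'s tally by $U_\textrm{comp}(l,w)$ while $w$ remains eligible (since a ballot on which $w$ precedes $l$ cannot reach $l$ before $w$ leaves the count, and ballot values never exceed 1), and chain the inequalities. You actually justify the two bounds in more detail than the paper, which simply asserts them, and your caveat about the comparison only being meaningful while both candidates are continuing matches the paper's own qualification ``while $w$ is not eliminated nor seated.''
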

\begin{proof}
Candidate $w$ always has a tally of at least $L_\textrm{basic}(w)$. Candidate
$l$ always has a tally of at most $U_\textrm{comp}(l,w)$ while $w$ is not
eliminated nor seated. Also, by assumption, $L_\textrm{basic}(w) >
U_\textrm{comp}(l,w)$, which means $w$'s tally always exceeds $l$'s tally while
$w$ is not eliminated nor seated.
\qed
\end{proof}

\begin{corollary}
\label{coro:NEB}
If $\NEB(w,l)$ holds then $l$ cannot be seated when $w$ is not.
\end{corollary}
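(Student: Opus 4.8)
The plan is to prove the equivalent implication: assuming $\NEB(w,l)$, if $l$ is seated then $w$ is seated. From the preceding Lemma we may take as given the invariant that, as long as $w$ is eligible (neither eliminated nor already holding a quota), $w$'s tally is strictly larger than $l$'s. Since the STV algorithm seats a candidate in exactly one of two ways, I would split on how $l$ becomes seated: (i)~$l$'s tally reaches $\quota$ in some round $r_l$, or (ii)~$l$ is among the still-eligible candidates seated in the final round, when the number of eligible candidates equals the number of seats remaining.

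For case (i), consider the round $r_l$ in which $l$'s tally first reaches $\quota$. Candidate $w$ cannot be eligible at that point: if it were, its tally would be below $\quota$ by definition of eligibility, yet the invariant would force it to exceed $l$'s tally, which is at least $\quota$ --- a contradiction. Hence $w$ must already have been seated by round $r_l$ (in which case we are done, since a seated candidate stays seated) or else eliminated in some earlier round $r_w < r_l$ ($w$ cannot be eliminated in $r_l$ itself, which is a seat-awarding round); I would rule out the latter. At round $r_w$, $l$ is still eligible --- it reaches its quota only later --- and the candidate eliminated in a round has the smallest tally among the eligible candidates, so $l$'s tally is at least $w$'s tally at round $r_w$; but $w$ is eligible at its own elimination round, so the invariant gives the reverse strict inequality there --- a contradiction. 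Thus $w$ is seated.

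Case (ii) is handled in the same way. Being seated as a remaining candidate means $l$ was never eliminated and never reached a quota, so $l$ is eligible at every round up to the final one. If $w$ is eligible in that final round, it too is among the remaining candidates and is seated, and we are done; otherwise $w$ was seated earlier (done) or eliminated in some earlier round $r_w$, and this last possibility contradicts the invariant exactly as in case (i) --- at $r_w$, $l$ is eligible and has tally at least $w$'s, whereas the invariant gives $w$'s tally strictly above $l$'s. So in every case $w$ is seated, which is precisely the claim. The one step needing care throughout is the scope of the Lemma's invariant: it constrains the two tallies only while $w$ is eligible, so each invocation must be paired with a check that $w$ is eligible at the round in question --- which holds automatically at $w$'s own elimination round, and which in case (i) is exactly what lets us conclude that $w$ cannot be eligible when $l$ attains a quota.
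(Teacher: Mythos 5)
Your proof is correct and rests on the same idea as the paper's: the preceding lemma's tally invariant prevents $w$ from being eliminated while $l$ is still in contention, so $l$ can never be seated without $w$. The paper's own proof is a single line (``$\NEB(w,l)$ implies we cannot eliminate $w$ before $l$''), so your version is considerably more complete --- in particular you explicitly handle the case where $l$ reaches a quota while $w$ is still eligible (forcing $w$'s tally past the quota as well) and the case where $l$ is seated as a remaining candidate, both of which the paper's one-liner leaves implicit.
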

\begin{proof}
$\NEB(w,l)$ implies we cannot eliminate $w$ before $l$.
\qed
\end{proof}

\subsection{Improving the lower bound} \label{subsec:improvedLowerBound}

Note that the $\NEB$ condition is very strong---there are many cases where
candidate $w$ does not lose to $l$ but $\NEB(w,l)$ does not hold. We can
improve this by using the knowledge of easily proven $\NEB$ conditions to
improve lower bounds on the tally of $w$ at any point at which $w$ could be
eliminated. At such points, we know that any candidate $o \in O$ for which
$\NEB(w,o)$ holds must have already been eliminated. Any ballots that would
move from $o$ to $w$ on the elimination of $o$ can be counted towards this
lower bound.  That motivates the following definition for an improved lower
bound:
\[ \Lbetter(w,O) = | \{ \ballot : \ballot \in \ballots, \first(\proj_{\cand -
    O}(\ballot)) = w \}|. \label{quantity:lower-better} \]
This allows us to reason about when $w$ might be eliminated, in particular, and
to prove that it cannot be.

\begin{lemma}\label{lemma:ilb}
Given a candidate $w$ and a set of candidates $O$, suppose $\NEB(w,o)$ holds
for all $o \in O$. Then $\Lbetter(w,O)$ is a lower bound on $w$'s tally at any
point at which it could be eliminated.
\end{lemma}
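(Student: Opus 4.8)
The plan is to analyse the state of the count at an arbitrary round in which $w$ is eligible and is a candidate for elimination (i.e.\ no candidate has a quota and $w$ has the fewest votes among eligible candidates), and to show that at such a round $w$'s tally is at least $\Lbetter(w,O)$. The key observation, already flagged in the text preceding the lemma, is that if $\NEB(w,o)$ holds then by \autoref{coro:NEB} candidate $o$ cannot be seated while $w$ is still eligible; since $w$ is eligible and unseated at this round, every $o \in O$ must already have been \emph{eliminated} (it cannot have been seated, and it is no longer eligible because $w$ has the fewest votes, so in particular $o$'s fate is already decided). Hence all of $O$ has been removed from contention strictly before this round.

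First I would fix such a round and let $\ballot$ be any ballot with $\first(\proj_{\cand - O}(\ballot)) = w$; I want to argue $\ballot$ currently counts towards $w$ (at some positive value, hence contributing at least its share to the tally — more on values below). Write $\ballot = [\dots]$ and let $c_1, c_2, \dots$ be the candidates on $\ballot$ in order. By the projection condition, every candidate appearing on $\ballot$ strictly before $w$ lies in $O$, and thus — by the previous paragraph — has been eliminated before the current round. So as counting has progressed, $\ballot$ has been passed down its preference list past exactly those earlier candidates (each eliminated, so the ballot moves on at its then-current value), and it cannot have stopped at any candidate before $w$: the only candidates before $w$ are in $O$ and all are gone. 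Therefore $\ballot$ now sits in $w$'s pile (or has already moved past $w$, but that can only happen if $w$ was itself seated or eliminated earlier, contradicting that $w$ is eligible this round). Counting one unit for each such ballot gives exactly $\Lbetter(w,O)$ ballots in $w$'s tally, and since all these ballots are at value $1$ at this point — they have only ever been transferred on \emph{eliminations}, never on a seating, because no candidate before $w$ on them ever got a quota (they were all eliminated) — each contributes its full value $1$. Hence $w$'s tally is at least $\Lbetter(w,O)$.

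The subtle point, and the one I would take most care over, is the value at which these ballots sit in $w$'s pile. A ballot is transferred at reduced value only when it leaves the pile of a candidate who \emph{got a quota}; a ballot leaving an \emph{eliminated} candidate keeps its current value. For a ballot $\ballot$ counted by $\Lbetter(w,O)$, every candidate preceding $w$ is in $O$ and hence was eliminated, not quota'd, so $\ballot$ has never passed through a quota transfer and retains value $1$ — so the count of ballots really is a valid lower bound on the (value-weighted) tally. One should also check the edge case $O = \emptyset$, where $\Lbetter(w,\emptyset) = L_\textrm{basic}(w)$ and the claim reduces to the trivial fact that $w$ always has at least its first-preference votes. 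Finally I would note the hypothesis $\NEB(w,o)$ is used only to guarantee $o$ is gone by this round; no quantitative bound on tallies is needed beyond that, so the argument is clean once the elimination-ordering fact is in place.
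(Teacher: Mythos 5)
Your proposal is correct and follows essentially the same route as the paper's proof: establish that at any point where $w$ could be eliminated every $o \in O$ must already be eliminated (not seated, by \autoref{coro:NEB}; not still eligible, since $w$'s tally always exceeds $o$'s), conclude that every ballot counted by $\Lbetter(w,O)$ has reached $w$'s pile, and note that these ballots retain full value because they were only ever transferred on eliminations, never on quota surpluses. You spell out some details the paper leaves implicit (e.g.\ that such a ballot cannot have skipped past $w$), but the argument is the same.
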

\begin{proof}
By assumption, $w$ cannot be eliminated before any candidate in $O$. If any
candidate in $O$ is seated, $\NEB(w,o)$ implies that $w$ must also be seated
(\autoref{coro:NEB}). So at any point at which $w$ could be eliminated,  all
candidates in $O$ are eliminated. Hence all the ballots in  $| \{ \ballot :
\ballot \in \ballots, \first(\proj_{\cand - O}(\ballot)) = w \}|$ contribute to
$w$'s tally. Since none of the candidates in $O$ reached a quota, all the
ballots still have their full value.
\qed
\end{proof}

\subsection{Improving the upper bound} \label{subsec:improvedUpperBound}

The simple $\NEB$ condition will fail when a candidate appears in many ballots,
but the values of these ballots are ``used up'' by seating earlier candidates.
In the following example, $\NEB(c_4,c_2)$ does not hold, but more careful
reasoning allows us to prove that $c_4$ cannot lose to $c_2$.

\begin{example}\label{ex:hard}
Consider a 2-seat election with ballots and multiplicities defined as
$[c_1,c_2]\colon 30$, $[c_4,c_1,c_2]\colon 20$, $[c_3,c_1,c_2]\colon 4$,
$[c_2,c_4]\colon 2$, $[c_3]\colon 4$, where the quota is 21. We cannot show
$\NEB(c_4,c_2)$ since $c_2$ appears in 30 ballots with $c_1$.  The maximum
transfer value in a 2-seat election is 2/3 (which can only occur if one
candidate gets all the ballots initially).  If we note that $c_1$ must be
seated, we can see that the maximum value $c_2$ can derive from these ballots
is 20. This still makes it impossible to show $c_4$ cannot lose to $c_2$.  In
fact the actual transfer value is 0.3, and with this $c_2$ can only gather 9.
Using a maximal transfer value of $0.3$ we could show that $c_4$ cannot lose to
$c_2$.
We have to be careful to consider the ballots $[c_3,c_1,c_2]$; since these are
not in $c_1$'s pile when it obtains a quota, they are not reduced in value.
When $c_3$ is eliminated they are passed to $c_2$ (since $c_1$ is already
seated) at full value.
\qed
\end{example}

In order to more effectively upper bound the tally of a candidate, we need to
reason about the possible transfer values of ballots that follow this route. 

We have the following trivial upper bound on transfer values: the maximum
transfer value $\tau$ in an $S$-seat election (using the unweighted Gregory
method) is $\tau = S/(S+1)$. This is only possible if one candidate gains all
the votes initially.

We now define a complex bound that relies on a number of assumptions.  We are
trying to find an upper bound on the tally of some candidate $c$ in order to
compare them with an alternate winner $b.$

Assume that all candidates in $W$ are seated (which may happen before, during
or after this bound is computed, and may occur by getting a quota or by
remaining at the end). The only candidates who may be seated but are not in $W$
are $b$ and $c$.  Let $\overline{\tau}$ be a vector of upper bounds
$\overline{\tau}_w, w \in W$, that is the maximum transfer value for any ballot
that was in $w$'s pile at the time it was seated (if it was).\footnote{We
simply require that an upper bound on a ballot's value is the maximum of the
upper bounds on per-candidate transfer values. This is true of transfer values
calculated according to the unweighted Gregory method, and weighted methods.}
Candidates in $W$ clearly cannot be eliminated, they are either eligible or
seated.  Assume also that $b$ is eligible.

Let $R$ be all of the other candidates, $R = \cand - W - \{b,c\}$.
Let $G$ be candidates for which $\NEB(g,c)$ hold for all $g \in G$.
Under these assumptions we define an upper bound on the tally of candidate $c$
as follows: 
\begin{equation}
\label{quantity:upper-complex}
 U_\textrm{complex}(c,b,W,\overline{\tau},G) =
 \sum_{\ballot \in \ballots} U_\textrm{complex}(c,b,W,\overline{\tau},\ballot)
\end{equation}
where
\begin{equation*}
U_\textrm{complex}(c,b,W,\overline{\tau},G,\ballot) = 
\left\{ 
\begin{array}{ll}
   0 & \exists g \in G - W \text{~s.t.~}\first(\proj_{g,c}(\ballot)) = g\\
   0 & c \text{ does not occur in } \beta \\
   0 & \first(\proj_{b,c}(\ballot)) = b\\
mt_w & \first(\ballot) \in W \\
   1 & \text{otherwise}
\end{array}
\right.
\end{equation*}
where $mt_w = \max\{\overline{\tau}_w : w \in W \text{~precedes~} c \text{~in~}
\ballot\}$.

\begin{lemma}\label{lemma:iub}
Under the assumption that only candidates $W \cup \{b,c\}$ can be seated, and
that all candidates in $W$ are seated (though this may happen before, during or
after this comparison), with upper bound on the transfer values
$\overline{\tau}$, and $b \not\in W$ is eligible, and that $\NEB(g,c)$ holds
for all $g \in G$, then $U_\textrm{complex}(c,b,W,\overline{\tau},G)$ is an
upper bound on the tally of $c$.
\end{lemma}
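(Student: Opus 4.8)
\medskip
\noindent\textbf{Proof plan.}
The plan is to bound $c$'s tally \emph{one ballot at a time}. For a ballot $\ballot \in \ballots$, call its \emph{contribution to $c$} the value $\ballot$ carries in the first round (if any) in which it is counted towards $c$, and $0$ if $\ballot$ is never counted towards $c$. A ballot leaves a candidate's pile only when that candidate becomes ineligible, and a ballot sitting in the pile of an \emph{eligible} candidate keeps a constant value (values change only at a transfer); hence each ballot is counted towards $c$ at most once, and $c$'s tally in any round equals the sum over the ballots then in $c$'s pile of their arrival values, which is at most $\sum_{\ballot \in \ballots} (\text{contribution of } \ballot)$. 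So it suffices to show that, for every $\ballot$, the contribution of $\ballot$ is at most $U_\textrm{complex}(c,b,W,\overline{\tau},G,\ballot)$; summing then gives the lemma.

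I would then check this per-ballot inequality by going through the five cases of the definition in order. The case ``$c$ does not occur in $\ballot$'' and the final ``otherwise'' case ($\le 1$) are immediate, using only that a ballot's value lies in $[0,1]$ throughout and that a ballot never counts towards a candidate it does not list. For the first case --- some $g \in G - W$ with $g$ preceding $c$ on $\ballot$ --- I would argue that $\ballot$ can never reach $c$: the ballot moves off $g$'s pile only once $g$ is seated or eliminated; $g \notin W \cup \{b,c\}$, so $g$ cannot be seated; and while $c$ is eligible, $\NEB(g,c)$ keeps $g$'s tally strictly above $c$'s, so $g$ is never the lowest-tallied eligible candidate and so cannot be eliminated before $c$ leaves the count. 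Hence by the time $\ballot$ could pass $g$ onwards to $c$, $c$ is already seated or eliminated, and the contribution is $0$. The third case ($b$ precedes $c$ on $\ballot$) is the same argument, made easier by the standing assumption that $b$ is eligible throughout, so $\ballot$ never gets past $b$.

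The heart of the proof, and the step I expect to be the main obstacle, is the fourth case, $\first(\ballot) = w_0 \in W$. Here I would trace the trajectory of $\ballot$. It stays in $w_0$'s pile until $w_0$ is seated (recall $w_0 \in W$ is never eliminated); if $w_0$ is seated only at the very end there is no transfer and the contribution is $0$, so assume $w_0$ reaches a quota; then $\ballot$ leaves $w_0$ with value at most $\overline{\tau}_{w_0}$, which is exactly the property the lemma assumes of the vector $\overline{\tau}$ (and which holds for the unweighted Gregory method). Thereafter, every transfer of $\ballot$ before it possibly reaches $c$ is off a candidate that is either (i) eliminated --- and then necessarily in $R = \cand - W - \{b,c\}$ (candidates of $W$ are never eliminated, $b$ is eligible, and a ballot never passes $c$ itself), so $\ballot$'s value carries over unchanged --- or (ii) seated, hence (as only $W \cup \{b,c\}$ are ever seated, and it is neither $b$ nor $c$) in $W$, whereupon $\ballot$'s value is reset to at most that candidate's transfer-value bound. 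Therefore, when $\ballot$ first counts towards $c$, its value is at most the transfer-value bound of the last $W$-candidate it passed, which is at most $mt_w = \max\{\overline{\tau}_w : w \in W \text{ precedes } c \text{ in } \ballot\}$ --- a maximum over a nonempty set, since $w_0 \in W$ precedes $c$. The delicate points here are that a ballot's value can \emph{increase} at a transfer, so I must keep arguing that it is immediately re-bounded by some $\overline{\tau}_w$ at every $W$-transfer (starting already at the first transfer, off $w_0$), and that the route $\ballot$ actually takes to $c$ avoids $b$ and the $G - W$ candidates ahead of $c$ --- which is precisely what the ordering of the cases in the definition guarantees.
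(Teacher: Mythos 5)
Your proposal is correct and follows essentially the same route as the paper's proof: a per-ballot case analysis of $U_\textrm{complex}(c,b,W,\overline{\tau},G,\ballot)$, showing each ballot is counted at its maximum possible arrival value in $c$'s pile. Your treatment of the $\first(\ballot)\in W$ case (explicitly tracing the ballot's trajectory and re-bounding its value at each transfer off a $W$-candidate) is a more detailed version of the argument the paper states tersely, and correctly handles the subtleties (values increasing on later transfers, seating without a quota at the end) that the paper also flags.
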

\begin{proof}
Consider each ballot $\ballot$, and each case in the definition of
$U_\textrm{complex}$.

If there exists, before $c$, on $\ballot$, a candidate in $ g \in G - W$ then
$c$ is preceded on the ballot by a candidate who cannot be eliminated before
$c$ and, by assumption, cannot win. Hence $\ballot$ counts 0 towards $c$'s
tally.

If $c$ does not occur in $\ballot$, or $b$ precedes $c$ in $\ballot$, then
clearly the ballot contributes 0 to $c$'s tally (given the assumption that $b$
is eligible).

If the first candidate on the ballot is $w \in W$ then we need to consider
whether or not $w$ has been seated. If $w$ is unseated, then the ballot still
sits with them and it contributes 0 to $c$'s tally. If $w$ is seated in the
last round without a quota, then it contributes 0 to $c$'s tally.  If $w$ has
obtained a quota, then the ballot has definitely been transferred at least
once.  If it ends up in $c$'s pile, it can only have been involved with
transfers that appear before $c$. The maximum value it can have is the maximum
of the transfer values. This is the overall maximum (since the others are
zero.)

The remaining ballots have maximum possible value 1.  Note that the case where
$\first(\proj_{W\cup R}(\ballot)) \in W$ does not imply that a vote has been
transferred if it reaches $c$'s pile. It may have been that the winner $w$ was
seated before the ballot reached $w$'s pile, in which case it could arrive in
$c$'s pile by elimination rather than by transfer.

Since each ballot is counted at its maximum possible value given the
assumptions, the upper bound is correct.
\qed
\end{proof}

We can now define a refined version of the `always greater' assertion that
takes into account this new bound. We define $w$ \emph{never loses} to $l$,
denoted $\CNEB(w,l,W,$ $\overline{\tau},G,O)$ as follows.
\label{assertion:CNEB}
Assume any previous winners are included in the set $W$ with upper bounds on
transfer values $\overline{\tau}$.
Assume $\NEB(w,o)$ holds for $o \in O$ and $\NEB(g,l)$ holds for all $g \in G$.
Then
\[ \CNEB(w,l,W,\overline{\tau},G,O) \equiv
   \Lbetter(w,O) > U_\textrm{complex}(l,w, W,\overline{\tau},G). \]

\begin{lemma}\label{lemma:wlo}
Under the assumption that only candidates $W \cup \{w,l\}$ can be seated, and
that all candidates in $W$ are seated (though this may happen before, during or
after this comparison), with maximal transfer values $\overline{\tau}$, and
$\NEB(w,o)$ holds for all $o \in O$ then and $\NEB(g,l)$ holds for all $g \in
G$ then $\CNEB(w,l,W,\overline{\tau},G,O)$ implies that $w$ never loses to
$l$.
\end{lemma}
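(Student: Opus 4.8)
The plan is to combine the three ingredients already in place: the improved lower bound of \autoref{lemma:ilb}, the complex upper bound of \autoref{lemma:iub}, and \autoref{coro:NEB}. First I would argue that at any point at which $w$ could be eliminated, the hypotheses needed to invoke \autoref{lemma:ilb} are met: by assumption $\NEB(w,o)$ holds for every $o \in O$, so $\Lbetter(w,O)$ is a genuine lower bound on $w$'s tally whenever $w$ is a candidate for elimination. Symmetrically, I would check that whenever $l$ could be eliminated (i.e., $l$ is still eligible), the hypotheses of \autoref{lemma:iub} hold: only $W \cup \{w,l\}$ can be seated, all of $W$ is seated (possibly later), $\overline{\tau}$ bounds the relevant transfer values, $w \notin W$ is eligible at that moment (which is exactly the situation we are analysing, since we are asking whether $w$ loses), and $\NEB(g,l)$ holds for all $g \in G$. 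Under those hypotheses, $U_\textrm{complex}(l,w,W,\overline{\tau},G)$ is an upper bound on $l$'s tally.

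Next I would push these two bounds together at a hypothetical elimination round. Suppose, for contradiction, that $w$ loses to $l$, i.e., there is a round in which $w$ is eliminated while $l$ is still eligible (if $l$ were already eliminated or seated before $w$, then $w$ has not lost to $l$, so this is the only case to consider). At that round $w$'s tally is at least $\Lbetter(w,O)$ by \autoref{lemma:ilb}, and $l$'s tally is at most $U_\textrm{complex}(l,w,W,\overline{\tau},G)$ by \autoref{lemma:iub}. But the assertion $\CNEB(w,l,W,\overline{\tau},G,O)$ says precisely that $\Lbetter(w,O) > U_\textrm{complex}(l,w,W,\overline{\tau},G)$, so $w$'s tally strictly exceeds $l$'s at the round $w$ is eliminated. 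Since STV eliminates a candidate with the fewest votes, $l$ would have been eliminated before $w$ at that round (or earlier), contradicting that $l$ is still eligible. Hence $w$ does not lose to $l$.

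The step I expect to be the main obstacle is making the timing rigorous---in particular, verifying that the side conditions of the two bounding lemmas are simultaneously valid at the critical round, rather than only at disjoint phases of the count. One has to be careful that ``$w$ could be eliminated'' for \autoref{lemma:ilb} and ``$b := w$ is eligible'' for \autoref{lemma:iub} really do coincide at the round we pick, and that the set $W$ of eventual winners is consistent with both lemmas even though some $w \in W$ may only be seated strictly after this round (the lemmas are explicitly stated to tolerate this, so I would lean on that clause). A secondary subtlety is the tie-breaking edge case: if $w$'s tally merely \emph{equalled} $l$'s we could not rule out $w$ being eliminated first, but the strict inequality in $\CNEB$ closes this gap, so I would make sure the final comparison is strict throughout.

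\begin{proof}
Suppose for contradiction that $w$ loses to $l$; that is, there is a round in which $w$ is eliminated while $l$ is still eligible. (If $l$ is seated or eliminated before $w$, then $w$ does not lose to $l$.) Consider that round. Since $\NEB(w,o)$ holds for all $o \in O$, \autoref{lemma:ilb} gives that $w$'s tally at this round is at least $\Lbetter(w,O)$. At the same round, $w$ is eligible (we are about to eliminate it), only $W \cup \{w,l\}$ can be seated, every candidate in $W$ is (eventually) seated, $\overline{\tau}$ bounds the relevant transfer values, and $\NEB(g,l)$ holds for all $g \in G$; so, taking $b := w$, \autoref{lemma:iub} gives that $l$'s tally at this round is at most $U_\textrm{complex}(l,w,W,\overline{\tau},G)$. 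By the definition of $\CNEB(w,l,W,\overline{\tau},G,O)$ we have $\Lbetter(w,O) > U_\textrm{complex}(l,w,W,\overline{\tau},G)$, so $w$'s tally strictly exceeds $l$'s tally at this round. But STV eliminates a candidate with the fewest votes, so $l$ (being eligible and having a strictly smaller tally than $w$) would be eliminated before $w$, contradicting the choice of this round. Hence $w$ never loses to $l$.
\qed
\end{proof}
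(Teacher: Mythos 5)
Your argument correctly reproduces the first half of the paper's proof: at any round where $w$ is about to be eliminated, \autoref{lemma:ilb} gives the lower bound $\Lbetter(w,O)$ on $w$'s tally, \autoref{lemma:iub} gives the upper bound $U_\textrm{complex}(l,w,W,\overline{\tau},G)$ on $l$'s tally, and the strict inequality in $\CNEB$ shows $w$ cannot have the minimum tally, so $w$ is never eliminated while $l$ is eligible. That part is fine and matches the paper.

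The gap is in what ``never loses'' has to mean for the lemma to do its job. You define losing as ``$w$ is eliminated while $l$ is still eligible'' and explicitly set aside the case where $l$ is seated before $w$ is eliminated. But that is precisely a case the lemma must exclude: in \textsf{FindAuditableAssertions} the assertion is used to rule out an alternate winner pair by arguing that $l$ cannot be \emph{seated} while $w$ is not, and in STV a candidate can be seated by reaching a quota without $w$ ever being eliminated. Your proof leaves open the scenario in which $w$ is never eliminated yet never seated, and $l$ accumulates a quota and wins anyway. The paper closes this with a separate counting argument: since $w$ can never be eliminated, if $w$ is not seated then every winner must obtain a quota, and none of the $\Lbetter(w,O)$ ballots (at full value) can ever sit with a seated candidate because no $o \in O$ can be seated while $w$ is not. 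If $l$ were a winner, its tally at election is at most $U_\textrm{complex}(l,w,W,\overline{\tau},G) < \Lbetter(w,O)$, forcing $\Lbetter(w,O) > \quota$; adding these disjoint ballots to the $\seats \times \quota$ held by the winners exceeds $(\seats+1)\times\quota > |\ballots|$, a contradiction. Without some argument of this kind your proof establishes only the weaker ``not eliminated before'' property, not the full conclusion of the lemma.
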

\begin{proof}
Suppose to the contrary that we are about to eliminate $w$.  By
\autoref{lemma:ilb} its tally is at least $\Lbetter(w,O)$. In order for $l$ to
get a seat it cannot already be eliminated. By assumption neither can any of
$W$.  By \autoref{lemma:iub}, $U_\textrm{complex}(l,w,W,\overline{\tau},G)$ is
an upper bound on the tally of $l$, since we have treated all other candidates
as eliminated.  Before $w$ is eliminated $l$ can never have more tally than
this bound.  Because the tally of $w$ is greater than the tally of $l$ it
cannot be eliminated.

From the above, $w$ can never be eliminated, therefore every candidate who is
seated must obtain a quota (otherwise $w$ would be seated at the end). Assume
$w$ is not seated. Now none of the ballots in $\Lbetter(w,O)$ can ever sit with
any candidate that is seated, since none of $O$ can be seated if $w$ is not.
Then the total tally of the $\seats$ winners is at least $\seats \times
\quota$, and none of the ballots in $\Lbetter(w,O)$ are included. Suppose to
the contrary $l$ is a winner, then its maximum quota when elected is
$U_\textrm{complex}(l,w,W,\overline{\tau},G)$ which is, by the $\CNEB$
assumption, less than $\Lbetter(w,O)$. Hence $\Lbetter(w,O) > \quota$. But this
gives a total tally of votes greater than $(\seats + 1) \times \quota$, more
than exist in the entire election. Contradiction.
\qed
\end{proof}


\section{Deriving assorters} \label{sec:assorters}

To use the SHANGRLA framework, we need to determine an assorter for each
assertion defined in \autoref{sec:reasoning}.  It suffices to show how to write
each one as a \emph{linear assertion} as per the general framework described by
Blom et al.\ \cite{evote2021assertions}.

Assertions $\NEB$ and $\CNEB$ involve comparing two tallies.  These can be
straightforwardly written in the standard linear form.

Assertion $\IQ$ is of the form $T \geqslant Q$ and assertion $\UT$ is of the
form $T < a \cdot Q$, for some tally $T$ and positive constant $a$.  These are
not immediately in linear form because $Q$ is not a tally nor a simple function
of a tally.  However, for each of these we can define a linear assertion that
is either equivalent or stricter.

To get an assertion of the form $T \geqslant Q$, we instead work with the
assertion $T > |\ballots| / (\seats + 1)$.  This latter assertion can be
written in linear form since $|\ballots|$ is a tally (simply count each valid
ballot).  To see that this implies our desired assertion, consider that $T >
|\ballots| / (\seats + 1) \geqslant \left\lfloor|\ballots| / (\seats +
1)\right\rfloor$.  Since $T$ is an integer strictly greater than the term on
the right, which is itself an integer, it must be at least 1 greater than that
term.  That is, $T \geqslant \left\lfloor|\ballots| / (\seats + 1)\right\rfloor
+ 1 = Q$.

To get an assertion of the form $T < a \cdot Q$, we instead work with the
slightly stricter assertion $T < a \cdot |\ballots| / (\seats + 1)$, which is
clearly expressible in linear form.  The floor function has the property that
$\lfloor x\rfloor \leqslant x < \lfloor x\rfloor + 1$.  Taking the right-hand
part of this double inequality and setting $x = |\ballots| / (\seats + 1)$
gives $|\ballots| / (\seats + 1) < Q$.  Our working assertion therefore implies
our desired assertion, $T < a \cdot Q$.


\section{RLAs for 2-seat STV elections}\label{sec:Methods}

Given the assertions we have defined in the previous section, we are now ready
to define an algorithm to choose a set of assertions that, if validated, will
ensure, within the risk limit $\alpha$, that the election result must be
correct.  We will try to choose a set of assertions that is expected to be
auditable by viewing few ballots.  We assume a function
$ASN(a,\alpha,\epsilon)$ that returns the \emph{average sample number} for
verifying assertion $a$, that is the expected number of ballots required to
verify the assertion $a$ if it indeed holds, given the recorded election data,
a risk limit $\alpha$ and expected error rate $\epsilon$.\footnote{The
\emph{expected error rate} is the expected proportion of ballots that are
counted erroneously when calculating the assorter corresponding to assertion
$a$.} For some assertions there are closed-form formulae for this estimation,
but in general we can use sampling to provide accurate estimates.  Note that
the expected auditing effort is not relevant to proving that the assertions, if
verified, certify the election result up to risk limit $\alpha$. Rather, we use
it to suggest a set of assertions that are expected to be easy to audit.

Assume the declared winners of the election $\election$ are $DW = \{w_1,w_2\}$.
We need to consider all possible alternative election results $AR =
\{\{c_1,c_2\} : \{c_1,c_2\} \subseteq \cand, \{c_1,c_2\} \neq DW\}$, and verify
assertions that will invalidate all such results.

We first use simple $\NEB$ assertions to eliminate as many pairs as possible.
\textsf{NonWinners} (\autoref{fig:nw-short}) finds a set of candidates, denoted
$NW$, that \textit{clearly cannot win}. In \textsf{NonWinners}, we first
determine all the \textit{always greater} relationships $\NEB$ that hold on the
basis of the recorded election result (lines 2--4). We then find the
candidates $c$ for which there exists at least two other candidates $w_1$ and
$w_2$ such that $\NEB(w_1, c)$ and $\NEB(w_2, c)$ (lines 5--10). For each
candidate $c \in NW$, we collect the easiest two $\NEB$ assertions which verify
that $c$ cannot win into $\NWA$. Any alternate winner pair that includes a
candidate $c \in NW$ can be immediately ruled out with the chosen $\NEB$
assertions in $\NWA$.

\begin{figure}[t]
\begin{tabbing}
xx \= xx \= xx \= xx \= xx \= xx \= \kill
\textsf{NonWinners}() \\
1\> $AG$ := $NW$ := $\NWA$ := $\emptyset$ \\
2\> \textbf{forall} $w \in \cand$, $l \in \cand - \{w\}$\\
3\> \> \textbf{if} $\NEB(w,l)$ \\
4\> \> \> $AG$ := $AG \cup \{(w,l)\}$ \\
5\> \textbf{forall} $c \in \cand$ \\
6\> \> \textbf{if} $|\{ w : (w,c) \in AG\}| \geqslant 2$ \\
7\> \> \> $NW$ := $NW  \cup \{c\}$ \\
8\> \> \> $w1$ := $\argmin_w \{ ASN(\NEB(w,c),\alpha,\epsilon) : (w,c) \in AG \}$ \\
9\> \> \> $w2$ := $\argmin_w \{ ASN(\NEB(w,c),\alpha,\epsilon) : (w,c) \in AG, w \neq w1 \}$ \\
10\> \> \> $\NWA$ := $\NWA \cup \{\NEB(w1,c), \NEB(w2,c) \}$ \\ 
11\> \textbf{return} ($AG$, $NW$, $\NWA$) 
\end{tabbing}
\caption{Pseudo-code to calculate definite non-winners $c$, for which we have
at least two candidates where $\NEB(w1,c)$ and $\NEB(w2,c)$ hold.  The function
returns the set $AG$ of always greater relations, the set $NW$ of non-winners,
and the set of assertions $\NWA$ required to verify this.}
\label{fig:nw-short}
\end{figure}

\begin{figure}
\begin{tabbing}
xx \= xx \= xx \= xx \= xx \= xx \= \kill
\textsf{FindAuditableAssertions}() \\
1\> $(AG,NW,\NWA)$ := \textsf{NonWinners}() \\
2\> $\assertion$ := $\NWA$ \\
3\> $ASN$ := $\max \{ ASN(a,\alpha,\epsilon) : a \in \assertion \}$ \\
4\> \textbf{forall} $\{c_1,c_2\} \subset \cand, \{c_1,c_2\} \neq \{w_1,w_2\}$ \\
 \> \> \% for each pair, find the easiest way to eliminate it \\
5\> \> \textbf{if} ($\{c_1,c_2\} \cap NW \neq \emptyset$) \textbf{continue} \\
6\> \> $LASN$ := $+\infty$ \\
7\> \> $G_1$ := $\{ g ~:~ a \in \cand - W, (g,c_1) \in AG \}$ \\
8\> \> $G_2$ := $\{ g ~:~ a \in \cand - W, (g,c_2) \in AG \}$ \\
9\> \> \textbf{forall} $o \in \cand - \{c_1,c_2\}$ \\
10\> \> \> $O$ := $\{ o' : (o,o') \in AG \}$ \\
 \> \> \> \% assume $c_1$ wins, show $c_2$ is eliminated \\
11\> \> \> \textbf{if} $\CNEB(o,c_2,\{c_1\},\{2/3\},G_2,O - \{c_1\})$ holds \\
12\> \> \> \> $LA'$ = $\{ \CNEB(o,c_2,\{c_1\},\{2/3\},G_2,O - \{c_1\})\} \cup \{ \NEB(o,o') : o' \in O \}\cup \{ \NEB(g,c_2) : g \in G_2 \}$ \\
13\> \> \> \> $LASN'$ := $\max \{ ASN(a,\alpha,\epsilon) :  a \in LA'\}$ \\ 
14\> \> \> \> \textbf{if} $LASN' < LASN$ \\
15\> \> \> \> \> $LASN$ := $LASN'$ \\ 
16\> \> \> \> \> $LA$ := $LA'$ \\
  \> \> \> \% assume $c_2$ wins, show $c_1$ is eliminated \\
17\> \> \> \textbf{if} $\CNEB(o,c_1,\{c_2\},\{2/3\},G_1,O - \{c_2\})$ holds \\
18\> \> \> \> $LA'$ = $\{ \CNEB(o,c_1,\{c_2\},\{2/3\},G_1,O - \{c_2\})\} \cup \{ \NEB(o,o') : o' \in O \} \cup \{ \NEB(g,c_1) : g \in G_1 \}$ \\
19\> \> \> \> $LASN'$ := $\max \{ ASN(a,\alpha,\epsilon) : a \in LA' \}$ \\
20\> \> \> \> \textbf{if} $LASN' < LASN$ \\
21\> \> \> \> \> $LASN$ := $LASN'$ \\ 
22\> \> \> \> \> $LA$ := $LA'$ \\
23\> \> \textbf{if} $LASN = +\infty$ \\
24\> \> \> \textbf{abort} \% no auditable assertions \\ 
25\> \> $ASN$ := $\max(ASN,LASN)$ \\
26\> \> $\assertion$ := $\assertion \cup LA$ \\
27\> \textbf{return} $\assertion$
\end{tabbing}
\caption{Calculate a set of assertions $\assertion$ sufficient to verify a
2-seat STV election.}
\label{fig:audit}
\end{figure}

Once we have a reduced set of alternate winner pairs, we use
\textsf{FindAuditableAssertions} (\autoref{fig:audit}) to find more complex
$\CNEB$ assertions that would rule them out.  The initial set of assertions is
set to $\NWA$, as produced by \textsf{NonWinners}. The current expected ASN is
given by $ASN$, and this will increase over the course of the algorithm.  We
then consider every alternate pair of winners, excluding those involving a
candidate in $NW$, and find a set of assertions $LA$, with an expected audit
cost of $LASN$, to eliminate this possibility.

For a given alternate winner pair $\{c_1,c_2\}$, we can rule out this outcome
by finding a candidate $o \in \cand - \{c_1,c_2\}$ for which we can show that
either: $o$ cannot be eliminated before $c_2$ in the context where $c_1$ is
seated (at some point); or, similarly, $o$ cannot be eliminated before $c_1$ in
the context where $c_2$ is seated.

We consider each candidate $o \in \cand - \{c_1,c_2\}$ in turn.  We first
consider if we can form an $\CNEB$ assertion showing that $o$ cannot be
eliminated before $c_2$ (or $c_1$) in the context where $c_1$ (or $c_2$) is
seated. We only need one such $\CNEB$ assertion to rule out the alternate
winner pair. As we consider each $o$, and these two different contexts, we keep
track of the easiest of these potential $\CNEB$ assertions. As described
earlier, an $\CNEB$ assertion between two candidates $w$ and $l$ will use a
number of pre-computed $\NEB$ assertions to guide which ballots should
contribute to a lower bound on the tally of $w$ and an upper bound on the tally
of $l$. When choosing a given $\CNEB$ to rule out the alternate winner pair
$\{c_1,c_2\}$, the set $LA$ contains the $\CNEB$ assertion and all $\NEB$
assertions that it uses (lines 12 and 18).

If we never find a way to eliminate a pair $\{c_1,c_2\}$ then the election is
not auditable with this approach; abort. Otherwise, update the global ASN, and
add the best assertions for removing $\{c_1,c_2\}$ to $\assertion$. Finally,
return $\assertion$.

\begin{theorem}
The set of assertions $\assertion$ returned by \textsf{FindAuditableAssertions}
(\autoref{fig:audit}) is sufficient to rule out all alternate election
results.
\end{theorem}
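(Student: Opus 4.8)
The plan is to argue, in the standard SHANGRLA fashion, that if every assertion in $\assertion$ is true of the actual ballots, then no pair $\{c_1,c_2\}\neq\{w_1,w_2\}$ can be the set of seated candidates, so the reported winners $\{w_1,w_2\}$ must be correct. I would fix an arbitrary alternate pair $\{c_1,c_2\}$, suppose for contradiction that $c_1$ and $c_2$ are the two seated candidates, and split according to how \textsf{FindAuditableAssertions} disposed of this pair. In every branch the punchline is the same trivial counting fact: seating three distinct candidates in a $2$-seat election is impossible.

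First, suppose some $c\in\{c_1,c_2\}\cap NW$ (the pairs skipped at line~5). By the construction in \textsf{NonWinners} there are two \emph{distinct} candidates $w1,w2$, each distinct from $c$, with $\NEB(w1,c)$ and $\NEB(w2,c)$ placed in $\NWA\subseteq\assertion$. Since these hold, \autoref{coro:NEB} says $c$ cannot be seated unless both $w1$ and $w2$ are; that seats three distinct candidates, a contradiction, so $\{c_1,c_2\}$ is not the winner set.

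Second, suppose $\{c_1,c_2\}\cap NW=\emptyset$. Because the algorithm returned (it did not \textbf{abort} at line~24), the pair loop found, for $\{c_1,c_2\}$, a candidate $o\in\cand-\{c_1,c_2\}$ and one of the two contexts for which the corresponding $\CNEB$ held on the recorded data, and committed the bundle $LA$ — that $\CNEB$ together with every $\NEB$ it uses (lines~12,~18) — to $\assertion$. Take, without loss of generality, the ``assume $c_1$ wins, show $c_2$ is eliminated'' context, so $\assertion$ contains $A^\ast=\CNEB(o,c_2,\{c_1\},\{2/3\},G_2,O-\{c_1\})$ together with $\NEB(o,o')$ for $o'\in O$ and $\NEB(g,c_2)$ for $g\in G_2$, all now assumed true. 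I would then check the hypotheses of \autoref{lemma:wlo} with $w=o$, $l=c_2$, $W=\{c_1\}$, $\overline{\tau}=\{2/3\}$, $G=G_2$ and last argument $O-\{c_1\}$: ``only $W\cup\{w,l\}=\{c_1,o,c_2\}$ can be seated'' holds because in the hypothetical the seated set is $\{c_1,c_2\}\subseteq\{c_1,o,c_2\}$; ``all of $W$ is seated'' holds because $c_1$ is a hypothetical winner; $\overline{\tau}=2/3$ is a valid transfer-value bound for any $2$-seat election ($S/(S+1)$ with $S=2$); and the two families of $\NEB$ side-conditions are exactly the $\NEB$ assertions bundled into $LA$. \autoref{lemma:wlo} then gives that $o$ never loses to $c_2$, so, $c_2$ being seated, $o$ is seated too — a third distinct seated candidate, contradiction. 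The ``$c_2$ wins'' branch is symmetric with $c_1$ and $c_2$ interchanged. Since the pair was arbitrary, the theorem follows.

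The two case splits are pure bookkeeping; the step that needs care is matching the algorithm's data structures to the precise premises of \autoref{lemma:wlo} — in particular, that the $\NEB$ facts feeding \autoref{lemma:ilb} and \autoref{lemma:iub} inside the proof of \autoref{lemma:wlo} are genuinely among the assertions the algorithm commits to, that dropping $c_1$ from the last argument ($O-\{c_1\}$) is what keeps $\Lbetter(o,\cdot)$ a sound full-value lower bound when $c_1$ may have taken a quota, and that the two ``seating'' premises of \autoref{lemma:wlo} are discharged for free merely by fixing the hypothetical winner pair. It is also worth noting explicitly that the transfer bound used is the crude $2/3$ rather than any $\UT$-derived value, so soundness does not rely on the optional transfer-value refinements.
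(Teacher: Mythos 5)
Your proof is correct and follows essentially the same two-case decomposition as the paper's own argument: pairs meeting $NW$ are ruled out via \autoref{coro:NEB} forcing a third seated candidate, and the remaining pairs via \autoref{lemma:wlo} applied to the committed $\CNEB$ bundle. Your version is simply more explicit about discharging the hypotheses of \autoref{lemma:wlo}, which the paper leaves implicit.
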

\begin{proof}
Each alternate election result is ruled out by $\assertion$.  For pairs where
$\{c_1,c_2\} \cap NW \neq \emptyset$, assume w.l.o.g.\ $c_1 \in NW$. Then by
\autoref{coro:NEB} there are two other candidates that will be seated if $c_1$
is seated, which rules out the pair.  Otherwise \autoref{lemma:wlo} shows that
one of $c_1$ or $c_2$ cannot win before another candidate $o$.
\qed
\end{proof}

\subsection{Two initial quotas case}\label{sec:TwoQuota}

The general algorithm described above can be applied to all 2-seat STV
elections but there are alternatives for some elections which might be easier
to audit.

Suppose $\IQ(w_1)$ and $\IQ(w_2)$ hold.  That is, both reported winners
achieved a quota initially. We can simply define $\assertion = \{\IQ(w_1),
\IQ(w_2)\}$ with an expected ASN of $\max\{ ASN(\IQ(w_1), \alpha, \epsilon),
ASN(\IQ(w_2), \alpha, \epsilon)\}$.

\subsection{One initial quota case} \label{sec:OneQuota}

A frequent occurrence in STV elections is that one candidate has a first
preference tally that exceeds a quota. We may be able to use this outcome
structure to generate a set of assertions that are easier to audit than those
found using the general algorithm. To generate a set of assertions to audit
such a 2-seat STV election, we start with the assertion $\IQ(w_1)$ for the
first seated candidate, $w_1$.

For the second reported winner, $w_2$, we then assert $\CNEB(w_2, c, \{w_1\},
\overline{\tau}, G, O)$ for all candidates $c \in \cand - \{w_1,w_2\}$ given an
assumed upper bound $\overline{\tau}$ on the transfer value of ballots leaving
$w_1$'s pile and a set of candidates $o \in O$ for which $\NEB(w_2, o)$ holds,
and $g \in G$ where $\NEB(g,c)$ holds.

For any choice of $\overline{\tau}$, we need to validate that the actual
transfer value for ballots leaving $w_1$'s tally is indeed below
$\overline{\tau}$. We do this with the assertion $\UT(w_1,\overline{\tau})$.

We could set $\overline{\tau}$ to the reported transfer value, $\tau_{w_1}$,
however the $\UT$ assertion would then have a zero margin and thus be
impossible to audit. The higher we set $\overline{\tau}$, up to a maximum value
of 2/3, the easier it will be to audit. However, as $\overline{\tau}$
increases, the $\CNEB$ assertions formed above (to check that $w_2$ cannot lose
to any reported losing candidate) become harder to audit. This is because the
upper bounds (on the tallies of these reported losers) in the context of each
$\CNEB$ increase as $\overline{\tau}$ increases. With this increase, the margin
of any $\CNEB$ that we can form decreases. 

To find an appropriate value of $\overline{\tau}$, we initialise the upper
bound to $\tau_{w_1}$ and gradually increase it in small increments, $\delta$.
For each choice of $\overline{\tau}$, we compute the set of $\CNEB$ assertions
required to show that $w_2$ cannot lose to any remaining candidate, keeping
track of the ASN required for that audit configuration. We continue to increase
$\overline{\tau}$ while the ASN of the resulting audit decreases. Once
$\overline{\tau}$ reaches 2/3, or the ASN of the audit configuration starts to
increase, we stop and accept the least-cost audit configuration found.


\section{Experimental results}

We ran the general, one-quota, and two-quota audit generation methods described
in \autoref{sec:Methods}, on a range of election instances: four 2-seat STV
elections conducted as part of the 2016 and 2019 Australian Senate elections;
and several US and Australian IRV elections re-imagined as 2-seat STV
elections.\footnote{Our code is publicly available at:
\url{https://github.com/michelleblom/stv-rla}} We used $\delta = 0.01$ for the
one-quota method, and all methods were implemented as ballot-comparison audits.
The ASNs for the resulting audits, based on a risk limit of 10\% and assumed
error rate of 0.2\%, are reported in \autoref{tab:Results}. A `--' indicates
that the given audit generation method was not applicable to the instance,
while $+\infty$ indicates that the method could not find an auditable set of
assertions.  Bold entries are instances where the general method is expected to
be more efficient than the one- and two-quota methods.

In general, the one-quota method formed the cheapest audit, where applicable.
This is expected to be case as the general approach assumes the highest
possible transfer value for ballots leaving the first winner's pile. The
one-quota method, in contrast, finds a trade-off between the difficulty of
checking that the transfer value for the first winner is less than an assumed
upper bound, and the difficulty of $\CNEB$ assertions to check that the second
winner cannot lose to any reported loser. The former is easier with a larger
upper bound, while the latter are easier with a smaller lower bound.

For the instances considered, the two-quota method forms more costly audits
than the one-quota and general methods. In instances where there is one
dominant candidate that receives significantly more first preference votes than
others, the second winner typically has a much smaller surplus. The size of
this surplus determines the margin of the assertion used to check that the
second winner is seated in the first round.

An advantage of the one-quota method is that we can form more $\NEB$s by using
the fact that we have a tight upper bound on the transfer value of ballots
leaving the first winner, who we know is seated in the first round. We can
create more of these $\NEB$s than would be possible if we were assuming an
upper bound of 2/3 on transfer values. With more available $\NEB$s, we can more
effectively increase and decrease the bounds on the tallies of candidates
within $\CNEB$s. This allows us to create more $\CNEB$s, including some that we
cannot form under the general method.

\begin{table}[t]
\centering
\caption{ASNs for audit configurations generated for four Australian Senate
    2-seat STV elections (2016 and 2019), and several US and Australian IRV
    elections re-imagined as 2-seat STV elections. We report the ASNs of audits
    generated using the one-quota, two-quota and general methods, where
    applicable. A risk limit of $\alpha = 10\%$ and error rate of $\epsilon =
    0.2\%$ were used.}
\scalebox{0.98}{
\begin{tabular}{|l|r|r|r|r|r|r|}
    \hline
    \textbf{Election} & $|\mathcal{C}|$ & \textbf{Valid} & \textbf{Quota} &
    \textbf{2-quota}  & \textbf{1-quota} & \textbf{General} \\
    & & \textbf{Ballots} & & \textbf{ASN} & \textbf{ASN} & \textbf{ASN} \\
    \hline
    2016 ACT & 22 & 254,767 & 84,923 & -- & 66 &  $+\infty$ \\
    2019 ACT & 17 & 270,231 & 90,078 & -- & 107 & $+\infty$ \\
    \hline
    2016 NT & 19 & 102,027 & 34,010 & 100 & 74 & 569 \\
    2019 NT & 18 & 105,027 & 35,010 & 100 & 72 & 327 \\
    \hline
    \hline
    \multicolumn{7}{|l|}{\textbf{IRV elections re-imagined as 2-seat STV elections}} \\
    \hline
    \multicolumn{7}{|l|}{\textit{\textbf{No candidate achieves a quota on first preferences}}} \\
    \hline
NSW'19 Barwon & 9 & 46,174 & 15,392 & -- & -- & 285 \\
2014 Oakland Mayor & 17 & 101,431 & 33,811 & -- & --& $+\infty$ \\
2014 Berkeley City Council D8 & 5 & 4,497 & 1,500 & -- & -- & $+\infty$ \\
2009 Aspen City Council & 11 & 2,487 & 830 & -- & -- & $+\infty$\\
2008 Pierce CAS & 7 & 262,447 & 87,483 & -- & -- & $+\infty$ \\
\hline
    \hline
    \multicolumn{7}{|l|}{\textit{\textbf{At least one candidate achieves a quota on first preferences}}} \\
    \hline
\multicolumn{7}{|l|}{\textbf{US elections}} \\
\hline
2013 ward 5 & 5 & 3,499 & 1,167 & -- & 114 & $+\infty$\\
OK CC D2 2014 & 6 & 13,500 & 4,501 & $+\infty$ & 100 & 127\\
Aspen 2009 Mayor & 5 & 2,528 & 843 &203 & 51 & 195 \\
2010 Berkeley CC D1 & 5 & 5,700 & 1,901 & -- &69 & 921 \\
2010 Berkeley CC D7 & 4 & 4,184 & 1,395 & 267 & 48 & 89 \\
Oakland 2010 Mayor &  11 &  119,607 &  39,870 &-- & 1,177  & $+\infty$ \\
Oakland 2010 CC D6 &  4 & 12,911 & 4,304 & -- & $+\infty$ &  $+\infty$\\
Pierce 2008 CA & 4 & 153,528 & 51,177 & 34 &19 & 23\\
Pierce 2008 CE & 5 & 299,132 & 99,711 &-- & 192 & $+\infty$\\
San Leandro 2010 D5 CC & 7 & 22,484  & 7,495 & 149 & 126 & 230\\
\hline
\multicolumn{7}{|l|}{\textbf{Australian elections: NSW 2019 Legislative Assembly}} \\
\hline
Auburn & 5 & 44,842 & 14,948 & 107 & 25  & 46\\
Bathurst &  6 & 50,833 & 16,945 & -- & 95 &  125\\
Blue Mountains & 7 & 49,228 & 16,410 & -- & 71  & $+\infty$\\
Clarence& 6& 49,355& 16,452 & -- & 116 & 251\\
Granville & 8 & 44,191 & 14,731 & 67 &19 & 29\\
Hornsby & 9 & 50,003 & 16,668 &-- & 123 & 2,210\\
\bf Kogarah & \bf 5 & \bf 45,576 & \bf 15,193 & \bf 30 & \bf 30 & \bf 20\\
 Ku-ring-gai &  6 &  48,730 &  16,244 &-- &  229 &  340\\
Lakemba & 6 & 44,615 & 14,872 &-- & 60 & 607\\
Macquarie Fields & 6 & 52,789 & 17,597 &-- & 29 & 39\\
\bf Murray & \bf 10 & \bf 47,233 & \bf 15,745 & \bf 145 & \bf 50 & \bf 25\\
Myall Lakes & 6 & 50,315 & 16,772 &-- & 28 & 41\\
Newcastle & 8 & 50,319 & 16,774 &-- & 211 & $+\infty$\\
Newtown & 7 & 46,312 & 15,438 &-- & 49 & 67\\
North Shore & 9 & 47,774 & 15,925 &-- &  509 & 1200\\
Northern Tablelands & 4 & 48,678 & 16,227 & -- & $+\infty$ & $+\infty$\\
Oatley & 5 & 48,120 & 16,041 & $+\infty$& 21 & 23\\
Parramatta & 7 & 48,728 & 16,243 & -- &26 & 31\\
\bf Penrith & \bf 10 & \bf 48,853 & \bf 16,285 &\bf 118 & \bf 40 & \bf 24\\
Pittwater & 8 &  49,119 &  16,374 &  -- & 190 &  223\\
Port Macquarie & 4 & 52,735 & 17,579 &-- & 44 & 57\\
Riverstone & 3 & 53,510 & 17,837 & 41 &16 & 18\\
\bf Upper Hunter & \bf 8 & \bf 48,525 & \bf 16,176 &-- & \bf 520 & \bf 145\\
Vaucluse & 7 & 46,023 &  15,342 &-- & 508 & $+\infty$\\
\hline
\end{tabular}}
\label{tab:Results}
\end{table}


\section{Conclusion}

We have presented the first method we are aware of for risk-limiting audits for
STV elections, restricted for the moment to 2-seat STV elections.

We were able to design an efficient audit for all of the \emph{real-world}
2-seat STV elections for which we have data, although the general method is not
strong enough for two of them.  For other elections---where we re-imagine IRV
elections as 2-seat STV elections---we see that if no candidate has a quota
initially, we struggle to find an auditable set of assertions. In the case that
one or two candidates initially obtain a quota, we are usually able to audit
them successfully, with the one-quota method usually requiring less effort, but
not always.

The assertions we define in this paper are not specific to 2-seat STV
elections. Thus, they provide a starting point for auditing STV elections with
more seats. Obviously even the 2-seat case is not easy, so investigating
tighter lower and upper bounds on tallies is likely to be valuable.


\bibliographystyle{splncs04}
\bibliography{references}


\end{document}